\documentclass[11pt,a4paper]{article}
\usepackage[margin=1in]{geometry}
\usepackage[T1]{fontenc}
\usepackage{lmodern}
\usepackage{amsmath,amssymb,amsthm,mathtools}
\usepackage{booktabs,array}
\usepackage{microtype}
\usepackage{xcolor}
\usepackage{tikz}
\usepackage[numbers,sort&compress]{natbib}
\usepackage[colorlinks=true,linkcolor=blue!45!black,
 citecolor=blue!45!black,urlcolor=blue!45!black]{hyperref}
\newtheorem{theorem}{Theorem}[section]
\newtheorem{lemma}[theorem]{Lemma}
\newtheorem{corollary}[theorem]{Corollary}
\theoremstyle{definition}

\newcommand{\ind}[1]{\mathbf{1}[#1]}
\newcommand{\dep}{\operatorname{dep}}
\newcommand{\opt}{\operatorname{opt}}
\newcommand{\calI}{\mathcal{I}}

\newcommand{\calQ}{\mathcal{Q}}
\newcommand{\EQ}{\mathsf{EQ}}
\newcommand{\NEQ}{\mathsf{NEQ}}
\newcommand{\Z}{\mathbb{Z}}
\title{Partial Interval Multicover Is NP-Complete\\Even for Unit Intervals}
\author{Mingchao Zhou\textsuperscript{1},~Yingli Ran\textsuperscript{1},~
Lei Zhao\textsuperscript{1},~Xiangzhi Tu\textsuperscript{1},~Zhao Zhang\textsuperscript{1}\thanks{Corresponding author. Email: \href{mailto:hxhzz@sina.com}{\texttt{hxhzz@sina.com}}}.\\
\normalsize\textsuperscript{1}Zhejiang Normal University}
\date{}

\begin{document}
\maketitle

\begin{abstract}
For a  interval family $\calI$, a  point set $X\subseteq\mathbb R$, demands $d:X\to\Z_{\ge0}$, and $p\in\Z_{\ge0}$, partial interval multicover asks for a subfamily $S\subseteq\calI$ minimizing $|S|$ subject to
\[
 |\{x\in X:\dep_S(x)\ge d(x)\}|\ge p,
 \qquad \dep_S(x)=|\{I\in S:x\in I\}|.
\]
We prove that the unweighted decision problem is NP-complete even for unit intervals, answering the open question in~\cite{Li2026}.
\end{abstract}

\section{Introduction}
\label{sec:introduction}

An instance of unweighted partial interval multicover~\cite{RanJinZhang2025,Li2026} is a tuple $(\calI,X,d,p)$, where $\calI=\{I_1,\ldots,I_N\}$ is a family of $N$  intervals, $X\subseteq\mathbb R$ is a finite set of distinct points, $d:X\to\Z_{\ge0}$ gives the number of selected intervals needed to fully cover each point, and $p\in\Z_{\ge0}$ is the minimum number of points that must be fully covered. For a selected subfamily $S\subseteq\calI$, define
\[
\dep_S(x)=|\{I\in S:x\in I\}|.
\]
We say that $S$ \emph{fully covers} $x$ if $\dep_S(x)\ge d(x)$. The objective is to minimize $|S|$ while fully covering at least $p$ points of $X$.

The decision version additionally takes a budget $B\in\Z_{\ge0}$ and asks whether some subfamily $S\subseteq\calI$ satisfies
\begin{equation}
 |S|\le B,
 \qquad |\{x\in X:\dep_S(x)\ge d(x)\}|\ge p.
 \label{eq:decision}
\end{equation}
Each interval may be selected at most once. All intervals have unit cost, and each fully covered point is counted once.

\subsection{Related work.}
Slav{\'i}k~\cite{Slavik1997} analyzed the approximation ratio of the greedy algorithm for partial cover. In set multicover, each element has an integer demand and must belong to at least that many selected sets. Rajagopalan and Vazirani~\cite{RajagopalanVazirani1998} gave randomized parallel approximation algorithms for this problem. Chekuri et al.~\cite{Chekuri2012} studied geometric set multicover, where the elements are points and the covering sets are geometric regions.

In partial multicover, an element is counted only when its demand is met. Shi et al.~\cite{Shi2019} gave an approximation algorithm that fully covers at least $(1-\varepsilon)p$ elements when $p$ are required, where $\varepsilon\in(0,1)$. Ran et al.~\cite{Ran2021} gave an algorithm that fully covers at least $p$ elements, with an approximation ratio independent of the maximum demand.

When $d(x)=1$ for every $x\in X$, this is partial interval cover, which is polynomial-time solvable by the dynamic program of Golab et al.~\cite{Golab2009}. With arbitrary demands, the known approximation bounds include a $2$-approximation by Ran et al.~\cite{RanJinZhang2025} and a polynomial-time approximation scheme by Li et al.~\cite{Li2026}. The budgeted maximization version, called maximum interval multicover, instead seeks a subfamily $S$ with $|S|\le B$ that fully covers as many points as possible; Xu et al.~\cite{XuRanZhang2025} gave a $1/2$-approximation for that problem.

\subsection{Our result.}
\begin{theorem}
	\label{thm:main}
	Unweighted partial interval multicover is NP-complete even when all intervals have unit length.
\end{theorem}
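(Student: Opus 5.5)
Membership in NP is immediate: a certificate is the subfamily $S$, and checking $|S|\le B$ and counting the points with $\dep_S(x)\ge d(x)$ takes polynomial time. For hardness we will reduce from a known NP-complete problem with a natural "choose one of two" structure. The macros $\EQ$/$\NEQ$ suggest we will build equality and inequality gadgets; the cleanest source seems to be a constraint problem such as \textsc{Max-2SAT}, or more likely \textsc{Max-Cut} on cubic graphs. In that problem each edge constraint is an $\NEQ$ between two Boolean variables, and we ask whether at least $k$ constraints can be satisfied. The reduction will encode each variable by a block of unit intervals placed along the line. Each variable gadget will have exactly two "cheap" optimal selections, corresponding to true and false.

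\textbf{Variable gadgets.} Consider a cluster of unit intervals with staggered left endpoints, together with high-demand points inside it. The demands are set so that the budget forces exactly a fixed number of intervals per gadget. With that many intervals, the high-demand anchor points can be fully covered in exactly two ways: shifted left or shifted right. We assign these anchor points large multiplicity, namely many nearby copies with the same demand. Then any solution achieving the threshold $p$ must fully cover all anchors. A counting argument then shows that each gadget receives exactly its quota of intervals and is in one of its two canonical states.

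\textbf{Transmission and constraint gadgets.} Because intervals have unit length, a variable's state can only influence points within distance about one. We therefore propagate the state along the line by a chain of copy gadgets; each chain is an $\EQ$ link that forces consecutive blocks to share orientation, or to alternate if that is convenient. Edge constraints are then realized by placing, where two chains interleave, a point whose demand is met exactly when the two adjacent blocks are in opposite states ($\NEQ$), or in equal states ($\EQ$). Each such point contributes one unit to the count. Setting $p$ equal to the anchor total plus $k$ and $B$ equal to the sum of the quotas gives the equivalence.

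\textbf{Main obstacle.} The hardest step is routing, since the line is one-dimensional while the constraint graph is not. Each variable must have a copy adjacent to each constraint it participates in, and crossing chains must not create spurious coverage or let intervals be shared. I would first try a layout that lists the constraints in order. Each variable keeps a "wire" running the whole length of the line, implemented by periodic gadgets at distinct small offsets. The depths of different wires add up at the same location, so I would shift the demands by the known baseline depth contributed by the other wires in canonical states. A constraint point then reads only the two wires involved. The delicate part is proving that deviating from canonical states never gains more constraint points than it loses anchor points. I would handle this with anchor multiplicities exceeding the total number of constraints.
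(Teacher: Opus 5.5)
Your NP-membership argument is fine, and using heavy anchors with a tight budget to force canonical states mirrors the paper's weight-$H$ groups. The first gap is that the constraint gadget cannot work as described. Whether a point is fully covered depends only on whether its depth reaches its demand. The depth at a point that sees two blocks in canonical states is additively separable, $c+f_u(x_u)+f_v(x_v)$. A demand $d$ met exactly on the $\NEQ$ states would need $f_u(0)+f_v(1)\ge d$ and $f_u(1)+f_v(0)\ge d$, while $f_u(0)+f_v(0)<d$ and $f_u(1)+f_v(1)<d$. Both pairs have the same total, a contradiction, and the same argument excludes $\EQ$. So one threshold point yields only an AND or OR of literals. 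The paper handles this monotonicity in two steps. It first builds \emph{exact-count} rewards: count $D$ for $\EQ$, counts $D\pm1$ for $\NEQ$. It then adds $\lambda z$ in every unit cell, which makes each cell reward a nondecreasing step function that demand points can represent. The shift is harmless because all intervals share one integer length $L$. The total depth over the cell midpoints is then $L|S|=LB$ for every selection of exactly $B$ intervals, and a smaller selection can be padded to $B$ without losing covered points. Nothing in your plan plays the role of this equal-length counting identity.

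The second gap is the routing step you flag yourself, and shifting demands by a baseline does not repair it. Sort the unit intervals by left endpoint. Each point lies in a consecutive run of them, and as the point moves right both ends of the run are nondecreasing, so no run is strictly nested in another; this is the paper's window-ordering lemma. A point reading wires $u$ and $v$ has a state-independent baseline only if its run contains complete encodings of every wire between them. Suppose wire $i$ is propagated by a run from its encoding in one stage to its encoding in the next, like the paper's $W_{\tau,i,i}$. Then:
\begin{itemize}
\item a run reading $u$ in one stage and $v$ in the next strictly contains the propagation runs of all wires strictly between $u$ and $v$;
\item a run reading both inside one stage is strictly contained in the propagation run of any wire outside $[u,v]$;
\item wires $i$ and $i+1$ cannot even swap, since the runs for $(i,i+1)$ and $(i+1,i)$ are nested.
\end{itemize}
So only neighbouring wires can interact, which forces a planar layered layout. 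This is why the paper reduces from independent set on cubic \emph{planar} graphs. It uses an orthogonal grid drawing with diagonal layers, so every edge joins $(\tau,i)$ to $(\tau+1,j)$ with $j\in\{i,i+1\}$, and bits are carried along subdivided paths by soft $\EQ$ rewards. Your source problem does not survive this restriction: Max-Cut is polynomial-time solvable on planar graphs, so you would need crossover gadgets, which you do not supply. The paper keeps both planarity and hardness by writing independent set as $\NEQ$ edge rewards plus unary rewards $\ind{x_v=0}$.
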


\paragraph{Reduction outline.}
We reduce from independent set on cubic planar graphs. In Subsection~\ref{sec:grid}, we represent vertex choices by bits and replace graph edges by grid paths. For any fixed choice of the original bits, choosing the new bits optimally adds the same constant to the original reward. Subsection~\ref{sec:windows} represents each grid bit by selected positions in a sequence. A block of consecutive positions earns a reward when it contains a specified number of selected positions. Large rewards ensure that every optimal selection of $B$ positions represents the grid bits correctly; this part of the reward is the same for every choice of the grid bits. Subsection~\ref{sec:intervals} represents positions by equal-length intervals and blocks by points, so that the selected count in each block equals the coverage depth at its point. Subsection~\ref{sec:demands} turns the exact-count rewards into coverage demands by adding a linear term. Equal interval lengths ensure that this term contributes the same total whenever exactly $B$ intervals are selected, so it does not change which selection is best. Finally, Subsection~\ref{sec:hardness} proves that the two decision instances have the same answer and that the construction has polynomial size.

\section{Proof of NP-completeness}
\label{sec:reduction}

\subsection{Binary rewards on a grid}
\label{sec:grid}

Our starting problem is independent set on simple 2-connected cubic planar graphs, which is NP-hard by Mohar~\cite{Mohar2001}. Here cubic means that every vertex has degree three. Let $(G,t)$ be an instance, with $G=(V,E)$, $n=|V|$, $m=|E|$, and $1\le t\le n$. Let $\alpha(G)$ be the size of a maximum independent set. Write $\ind{P}=1$ when the statement $P$ is true and $\ind{P}=0$ otherwise. We must decide whether $\alpha(G)\ge t$.

For bits $x_v\in\{0,1\}$, $v\in V$, consider the objective
\begin{equation}
 J(x)=\sum_{uv\in E}\ind{x_u\ne x_v}
       +\sum_{v\in V}\ind{x_v=0}.
 \label{eq:binary-reward}
\end{equation}
Let $S=\{v:x_v=1\}$ and $e(S)=|E(G[S])|$. Since $G$ is cubic, exactly $3|S|-2e(S)$ edges have exactly one endpoint in $S$. Thus
\begin{equation}
 J(x)=3|S|-2e(S)+n-|S|
     =n+2\bigl(|S|-e(S)\bigr).
 \label{eq:binary-identity}
\end{equation}

\begin{lemma}
\label{lem:independence}
For every  simple graph $G$,
\[
 \max_{S\subseteq V(G)}\bigl(|S|-e(S)\bigr)=\alpha(G).
\]
Consequently, the maximum of~\eqref{eq:binary-reward} on a cubic graph is $n+2\alpha(G)$.
\end{lemma}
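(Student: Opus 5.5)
We prove the two inequalities separately; the "consequently" part then follows from \eqref{eq:binary-identity}.

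\emph{Lower bound.} Take $S$ to be a maximum independent set. Then $e(S)=0$, so $|S|-e(S)=\alpha(G)$ and the maximum is at least $\alpha(G)$.

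\emph{Upper bound.} Fix any $S\subseteq V(G)$. We repair $S$ greedily while keeping track of $|S|-e(S)$. As long as $G[S]$ contains an edge $uv$, delete one endpoint, say $u$, from $S$. This lowers $|S|$ by exactly one. Since $u$ lies on the edge $uv$, it has degree $\deg_{G[S]}(u)\ge 1$, so $e(S)$ drops by at least one. Hence $|S|-e(S)$ does not decrease. Each step removes a vertex, so the process stops after finitely many steps, at a set $S'\subseteq S$ with $e(S')=0$. This $S'$ is independent, and
\[
|S|-e(S)\le |S'|-e(S')=|S'|\le\alpha(G).
\]

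An equivalent one-line argument: choose one vertex from each edge of $G[S]$ and delete all of them. This removes at most $e(S)$ vertices and leaves an independent set. Its size is at least $|S|-e(S)$, so again $|S|-e(S)\le\alpha(G)$.

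\emph{Consequence for \eqref{eq:binary-reward}.} For cubic $G$, the map $x\mapsto S=\{v:x_v=1\}$ is a bijection between bit vectors and vertex subsets. By \eqref{eq:binary-identity}, $J(x)=n+2(|S|-e(S))$. Maximizing over $x$ is therefore the same as maximizing $|S|-e(S)$ over $S$, which by the two bounds above equals $\alpha(G)$. So the maximum of $J$ is $n+2\alpha(G)$.

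No step is a real obstacle. The only point needing care is the deletion step: we must check that removing an endpoint of an induced edge lowers $e(S)$ by at least one. This holds because that edge itself disappears.
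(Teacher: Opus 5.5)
Your proof is correct and follows essentially the same route as the paper: a maximum independent set gives the lower bound, and for the upper bound you repeatedly delete a vertex incident with an edge of $G[S]$, which changes $|S|-e(S)$ by $\deg_{G[S]}(v)-1\ge 0$, and then you apply \eqref{eq:binary-identity}. Your one-line alternative, which deletes one endpoint of each edge of $G[S]$ (a vertex cover of size at most $e(S)$), is an equally valid shortcut.
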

\begin{proof}
	If $S$ is a maximum independent set, then $e(S)=0$ and
	$|S|-e(S)=\alpha(G)$. Thus the maximum is at least $\alpha(G)$.
	For any $S\subseteq V(G)$, we show that $|S|-e(S)\le\alpha(G)$.
	Whenever $e(S)>0$, choose a vertex $v\in S$ incident with an
	edge of $G[S]$ and set $S'=S\setminus\{v\}$. Then
	$$
	|S'|-e(S')
	=|S|-e(S)+\deg_{G[S]}(v)-1
	\ge |S|-e(S).
	$$
	Repeating this step leaves an independent set $S^*$ with
	$$
	|S|-e(S)\le |S^*|\le\alpha(G).
	$$
	
	As $x$ ranges over all binary assignments, $S=\{v:x_v=1\}$
	ranges over all subsets of $V(G)$. For a cubic graph,
	\eqref{eq:binary-identity} gives
	$$
	\max_x J(x)
	=n+2\max_{S\subseteq V(G)}\bigl(|S|-e(S)\bigr)
	=n+2\alpha(G).
	$$
	The Lemma is proved.
\end{proof}

\paragraph{Replacing edges by grid paths.}

Draw $G$ with vertices at integer grid points and edges formed by horizontal and vertical segments, so that edge interiors are disjoint. Such a drawing with $O(n)$ bends can be found in polynomial time~\cite{TamassiaTollis1989}. Number the columns containing vertices or bends consecutively from left to right, and do the same for the rows from bottom to top. Since there are $O(n)$ vertices and bends, the drawing now lies in an $O(n)\times O(n)$ grid. Add a vertex at every intermediate integer grid point along each path, dividing it into edges of length one. There are $O(n)$ segments, each of length $O(n)$, so the total number of unit edges is $O(n^2)$. A related grid construction for independent set on cubic planar graphs appears in~\cite{Wocjan2003}.

Keep the bits at the original vertices and assign a bit $x_u\in\{0,1\}$ to each added vertex. For each grid edge $ab$, define $\EQ(a,b)=\ind{x_a=x_b}$ and $\NEQ(a,b)=\ind{x_a\ne x_b}$. If the path replacing an original edge $uv$ has $\ell$ edges, put an $\NEQ$ reward on the first edge starting at $u$ and an $\EQ$ reward on each of the remaining $\ell-1$ edges. All rewards have weight one.

In the example shown in Figure~\ref{fig:grid-path}, the original edge $uv$ is replaced by a path of four unit edges. The black vertices $u$ and $v$ are the original endpoints, and the three white vertices are added along the path. The edge from $u$ to the first added vertex receives an $\NEQ$ reward, and the other three edges receive $\EQ$ rewards. If $x_u=0$, $x_v=1$, and all three added vertices have bit $1$, all four rewards are obtained, giving a total reward of $4$.

\begin{figure}[htbp]
\centering
\begin{tikzpicture}[x=1.3cm,y=1.3cm,line width=0.8pt,font=\small]
  \useasboundingbox (-0.6,-1) rectangle (8.2,2.2);

  \draw (0,0) -- (1.5,0);
  \fill (0,0) circle (2.3pt);
  \fill (1.5,0) circle (2.3pt);
  \node at (0,-0.5) {$u$};
  \node at (1.5,-0.5) {$v$};
  \node at (0.75,0.5) {$\NEQ$};

  \draw[->,>=stealth] (2.1,0) -- (3.1,0);

  \draw[gray!25,very thin,step=1] (4,0) grid (7,1);
  \draw (4,0) -- (5,0) -- (6,0) -- (6,1) -- (7,1);
  \fill (4,0) circle (2.3pt);
  \fill (7,1) circle (2.3pt);
  \foreach \p in {(5,0),(6,0),(6,1)}
    \draw[fill=white] \p circle (2.3pt);
  \node at (3.5,0.5) {$u$};
  \node at (7.6,1) {$v$};
  \node at (4.5,-0.5) {$\NEQ$};
  \node at (5.5,-0.5) {$\EQ$};
  \node at (6.5,0.5) {$\EQ$};
  \node at (6.5,1.5) {$\EQ$};
\end{tikzpicture}
\caption{Edge replacement.}
\label{fig:grid-path}
\end{figure}

\begin{lemma}
\label{lem:path}
For fixed endpoint bits $x_u,x_v$, the maximum total reward on the path replacing $uv$, over all choices of its internal bits, is $ \ell-1+\ind{x_u\ne x_v}.$
\end{lemma}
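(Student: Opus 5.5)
Label the path vertices $u=w_0,w_1,\ldots,w_\ell=v$. The first edge $w_0w_1$ carries $\NEQ$ and the edges $w_iw_{i+1}$ for $1\le i\le \ell-1$ carry $\EQ$. The internal bits are $x_{w_1},\ldots,x_{w_{\ell-1}}$, and they are free. The plan is to prove a matching upper and lower bound on the path reward
\[
R=\NEQ(w_0,w_1)+\sum_{i=1}^{\ell-1}\EQ(w_i,w_{i+1}),
\]
with $x_{w_0}=x_u$ and $x_{w_\ell}=x_v$ fixed. I assume $\ell\ge 1$, which holds because every original edge is replaced by a path with at least one unit edge.

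\emph{Upper bound.} Each of the $\ell$ terms is at most $1$, so $R\le \ell$. It remains to show that $R=\ell$ forces $x_u\ne x_v$. If $R=\ell$, then all $\EQ$ rewards are earned, so $x_{w_1}=x_{w_2}=\cdots=x_{w_\ell}=x_v$. The $\NEQ$ reward is also earned, so $x_u\ne x_{w_1}=x_v$. Hence when $x_u=x_v$ we get $R\le \ell-1$. In both cases this gives $R\le \ell-1+\ind{x_u\ne x_v}$.

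\emph{Lower bound.} Set every internal bit to $x_v$. Then all $\ell-1$ $\EQ$ rewards are earned, and the $\NEQ$ reward equals $\ind{x_u\ne x_v}$. So this assignment attains $\ell-1+\ind{x_u\ne x_v}$. The case $\ell=1$ fits the same argument, since there are no internal bits and the single reward is exactly $\ind{x_u\ne x_v}$.

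The only point needing care is the equality case of the upper bound, namely that full reward propagates the value $x_v$ back along the path to $w_1$. An alternative way to see it is a parity argument: the number of bit changes along the walk from $u$ to $v$ has the same parity as $\ind{x_u\ne x_v}$. The direct propagation argument is simpler, so I will use that.
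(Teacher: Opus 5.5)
Your proof is correct and follows essentially the same argument as the paper. Setting every internal bit to $x_v$ gives the lower bound, and when $x_u=x_v$, earning all $\ell-1$ $\EQ$ rewards forces $x_{w_1}=x_v=x_u$, so the $\NEQ$ reward is lost. Your explicit treatment of the case $\ell=1$ is a small addition the paper leaves implicit.
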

\begin{proof}
Set every internal bit to $x_v$. Each of the $\ell-1$ $\EQ$ rewards then equals $1$, and the $\NEQ$ reward on the first edge equals $\ind{x_u\ne x_v}$.

If $x_u\ne x_v$, this assignment obtains all $\ell$ rewards, so the maximum is $\ell$.

If $x_u=x_v$, obtaining all $\EQ$ rewards would require every internal bit to equal $x_v=x_u$. The first edge would then contribute $0$. Thus no assignment can obtain all $\ell$ rewards, and the total reward is at most $\ell-1$. The assignment above gives reward $\ell-1$, so the maximum is $\ell-1$.
\end{proof}

For each $e\in E$, let $\ell_e$ be the number of edges in its grid path, and let $Q=\sum_{e\in E}\ell_e$. The term $\ell_e-1$ in Lemma~\ref{lem:path} depends only on the path length. Summing these terms gives
$$
 C=\sum_{e\in E}(\ell_e-1)=Q-m.
$$
Keep the reward $\ind{x_v=0}$ at each original vertex $v$, and assign no such reward to the new vertices.

Fix the bits $x$ at the original vertices. Since the paths have no internal vertices in common, their internal bits can be chosen independently. By Lemma~\ref{lem:path}, the maximum total reward for these fixed original bits is
$$
 \sum_{e=uv\in E}\bigl(\ell_e-1+\ind{x_u\ne x_v}\bigr)
 +\sum_{v\in V}\ind{x_v=0}
 =C+J(x).
$$
Let $\opt_{\mathrm{grid}}$ be the maximum total reward over all grid bits. Maximizing also over the original bits and using Lemma~\ref{lem:independence} gives
\begin{equation}
 \opt_{\mathrm{grid}}
 =C+\max_x J(x)
 =C+n+2\alpha(G).
 \label{eq:grid-value}
\end{equation}

For the four-edge path in Figure~\ref{fig:grid-path}, the maximum reward is $3$ when $x_u=x_v$ and $4$ when $x_u\ne x_v$. The original edge earns $0$ and $1$ in these cases, so replacing it adds $3$ in either case.

We use diagonals as layers so that every grid edge joins two consecutive layers. Shift the drawing so that its leftmost column and lowest row have coordinate zero. Put points with the same $a+b$ in one layer, and number their positions by $a+1$. Thus $(a,b)$ receives the indices
\begin{equation}
 \tau=a+b,\qquad i=a+1.
 \label{eq:grid-indexing}
\end{equation}
A vertical unit edge from $(a,b)$ to $(a,b+1)$ keeps $i$ unchanged and increases $\tau$ by one. A horizontal unit edge from $(a,b)$ to $(a+1,b)$ increases both $\tau$ and $i$ by one. Hence every edge has the form
\begin{equation}
 (\tau,i)(\tau+1,j),\qquad j\in\{i,i+1\}.
 \label{eq:adjacent-layers}
\end{equation}
	Choose $T,K=O(n)$ with $K\ge2$ so that every vertex of the drawing has indices $0\le\tau\le T$ and $1\le i\le K$. For each unused position in this range, add a bit that appears in no reward.

In Figure~\ref{fig:diagonal-numbering}, the point $(1,0)$ has indices $(1,2)$. Its vertical neighbor $(1,1)$ has indices $(2,2)$, and its horizontal neighbor $(2,0)$ has indices $(2,3)$. Both edges join layer $1$ to layer $2$; the position index either stays the same or increases by one.

\begin{figure}[htbp]
\centering
\begin{tikzpicture}[x=1.1cm,y=1.1cm,line width=0.8pt,font=\small]
  \useasboundingbox (-1,-1.2) rectangle (7,3.5);
  \draw (0,1.5) -- (0,0) -- (1.5,0);
  \foreach \p in {(0,0),(0,1.5),(1.5,0)}
    \fill \p circle (2.2pt);
  \node at (0,-0.6) {$(1,0)$};
  \node at (0,2.1) {$(1,1)$};
  \node at (1.5,-0.6) {$(2,0)$};
  \node at (0.75,2.9) {Coordinates $(a,b)$};
  \draw[->,>=stealth] (2.5,0.75) -- (3.4,0.75);
  \draw (6,0) -- (4.5,0) -- (6,1.5);
  \foreach \p in {(4.5,0),(6,0),(6,1.5)}
    \fill \p circle (2.2pt);
  \node at (4.5,-0.6) {$(1,2)$};
  \node at (6,-0.6) {$(2,3)$};
  \node at (6,2.1) {$(2,2)$};
  \node at (5.25,2.9) {Indices $(\tau,i)$};
\end{tikzpicture}
\caption{Diagonal numbering.}
\label{fig:diagonal-numbering}
\end{figure}

\subsection{Encoding the grid by consecutive windows}
\label{sec:windows}

We represent each grid bit by selected positions in a sequence. Let $y_h\in\{0,1\}$ indicate whether position $h$ is selected. A window $W=[l,r]$ consists of the consecutive positions $l,\ldots,r$. It receives a reward of weight $w$ when exactly $d$ of its positions are selected:
\begin{equation}
 w\ind{\sum_{h=l}^{r}y_h=d},
 \qquad y_h\in\{0,1\},\quad w,d\in\Z_{>0}.
 \label{eq:exact-reward}
\end{equation}
A window may carry several rewards for different counts.

\paragraph{Selection variables and consistency.}

For each layer $\tau$, introduce two rows of $2K$ binary variables,
$$
\begin{aligned}
 A_\tau&=(a_{\tau,1},a'_{\tau,1},\ldots,a_{\tau,K},a'_{\tau,K}),\\
 B_\tau&=(b_{\tau,1},b'_{\tau,1},\ldots,b_{\tau,K},b'_{\tau,K}),
\end{aligned}
$$
and place them in the order $A_0,B_0,A_1,B_1,\ldots,A_T,B_T$ to obtain $y=(y_1,\ldots,y_N)$. Set
\begin{equation}
 N=4K(T+1),\qquad B=2K(T+1)=N/2.
 \label{eq:selection-size}
\end{equation}
We maximize the sum of the window rewards over all selections satisfying $\sum_{h=1}^N y_h=B$. With $c_\tau=4K\tau$, the variables occupy the positions
$$
\begin{aligned}
 a_{\tau,i}&=y_{c_\tau+2i-1}, & a'_{\tau,i}&=y_{c_\tau+2i},\\
 b_{\tau,i}&=y_{c_\tau+2K+2i-1}, & b'_{\tau,i}&=y_{c_\tau+2K+2i}.
\end{aligned}
$$

We want each pair to contain one selected position, with $a_{\tau,i}$ and $b_{\tau,i}$ taking different values:
\begin{equation}
 a_{\tau,i}+a'_{\tau,i}=b_{\tau,i}+b'_{\tau,i}=1,
 \qquad a_{\tau,i}+b_{\tau,i}=1.
 \label{eq:consistent}
\end{equation}
An assignment satisfying~\eqref{eq:consistent} for every $\tau,i$ is called \emph{consistent}. Each such assignment selects two of the four positions for every grid bit, giving exactly $B$ selections. All assignments of size $B$ are allowed; the rewards below ensure that every optimal assignment is consistent.

We also use $A_\tau$ and $B_\tau$ to denote the full-row windows. For $2\le i\le K$, let $E_{\tau,i}$ start at $a_{\tau,i}$ and end at $b'_{\tau,i-1}$. For $1\le i\le K$, let $F_{\tau,i}$ start at $a'_{\tau,i}$ and end at $b_{\tau,i}$. The rewards on the $E$ windows will make corresponding pairs contain the same number of selected positions. The rewards on the $F$ windows will make their first entries different. Table~\ref{tab:windows} lists their endpoint positions.

\begin{table}[htbp]
\centering
\caption{Consistency windows, with $c_\tau=4K\tau$.}
\label{tab:windows}
\small
\begin{tabular}{@{}lll@{}}
\toprule
Window & Left endpoint & Right endpoint\\
\midrule
Full row $A_\tau$ & $c_\tau+1$ & $c_\tau+2K$\\
$E_{\tau,i}$, $2\le i\le K$ & $c_\tau+2i-1$ & $c_\tau+2K+2i-2$\\
$F_{\tau,i}$, $1\le i\le K$ & $c_\tau+2i$ & $c_\tau+2K+2i-1$\\
Full row $B_\tau$ & $c_\tau+2K+1$ & $c_\tau+4K$\\
\bottomrule
\end{tabular}
\end{table}

On the grid, there are $Q$ rewards for edges and $n$ rewards for original vertices, each at most one. Their sum is at most
\begin{equation}
 U=Q+n.
 \label{eq:reward-bound}
\end{equation}
Set $H=U+1$. Losing a reward of weight $H$ cannot be compensated for by all the grid rewards together. Put a reward of weight $H$ at target count $K$ on each full-row window and each $E_{\tau,i}$. On each $F_{\tau,i}$, put the two rewards
\begin{equation}
 H\ind{z=K-1}+H\ind{z=K+1},
 \label{eq:complement-reward}
\end{equation}
where $z$ is the number of selected positions in that window. Since $z$ cannot equal both $K-1$ and $K+1$, these two rewards together contribute either $0$ or $H$. Count this pair as one group. Each full-row reward and each $E$ reward also forms one group, so there are
\begin{equation}
 M=\bigl(2+(K-1)+K\bigr)(T+1)=(2K+1)(T+1)
 \label{eq:group-count}
\end{equation}
groups. Their total consistency reward is at most $MH$. These weight-$H$ rewards are counted separately from rewards of weight one for the grid.

\begin{figure}[htbp]
\centering
\begin{tikzpicture}[x=1.4cm,y=0.9cm,line width=0.7pt,font=\small]
  \useasboundingbox (-0.6,-3.2) rectangle (8.9,3.3);
  \draw[gray!45] (-0.2,0) -- (7.2,0);
  \foreach \x/\name/\value in {
    0/a_1/1,1/{a'_1}/0,2/a_2/0,3/{a'_2}/1,
    4/b_1/0,5/{b'_1}/1,6/b_2/1,7/{b'_2}/0}
  {
    \node at (\x,1.0) {$\name$};
    \node[draw,circle,fill=white,inner sep=1.8pt] at (\x,0) {$\value$};
  }
  \draw (-0.15,1.75) -- (3.15,1.75);
  \draw (-0.15,1.63) -- (-0.15,1.87);
  \draw (3.15,1.63) -- (3.15,1.87);
  \node at (1.5,2.5) {$A_\tau$};
  \draw (3.85,1.75) -- (7.15,1.75);
  \draw (3.85,1.63) -- (3.85,1.87);
  \draw (7.15,1.63) -- (7.15,1.87);
  \node at (5.5,2.5) {$B_\tau$};
  \draw (1,-0.9) -- (4,-0.9);
  \draw (1,-1.02) -- (1,-0.78);
  \draw (4,-1.02) -- (4,-0.78);
  \node[anchor=west] at (4.4,-0.9) {$F_{\tau,1}$: count $1$};
  \draw (2,-1.65) -- (5,-1.65);
  \draw (2,-1.77) -- (2,-1.53);
  \draw (5,-1.77) -- (5,-1.53);
  \node[anchor=west] at (5.4,-1.65) {$E_{\tau,2}$: count $2$};
  \draw (3,-2.4) -- (6,-2.4);
  \draw (3,-2.52) -- (3,-2.28);
  \draw (6,-2.52) -- (6,-2.28);
  \node[anchor=west] at (6.4,-2.4) {$F_{\tau,2}$: count $3$};
\end{tikzpicture}
\caption{Windows for $K=2$.}
\label{fig:consistency-windows}
\end{figure}

In the example in Figure~\ref{fig:consistency-windows}, $K=2$, $A_\tau=(1,0,0,1)$, and $B_\tau=(0,1,1,0)$. The layer index is omitted from the variable labels. Both full rows have count $2$. Window $E_{\tau,2}$ contains the bits $(0,1,0,1)$ and also has count $2$. Windows $F_{\tau,1}$ and $F_{\tau,2}$ contain $(0,0,1,0)$ and $(1,0,1,1)$, respectively, giving counts $1$ and $3$. Each of the five groups in this layer contributes $H$, for a total of $5H$.

\begin{lemma}
\label{lem:consistency}
The consistency reward equals $MH$ if and only if the assignment is consistent.
\end{lemma}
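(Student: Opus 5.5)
The plan is to reduce the statement to a per-layer count computation. Each group contributes either $0$ or $H$; for the $F$ groups this is the remark after~\eqref{eq:complement-reward}. So the consistency reward equals $MH$ exactly when every group earns $H$. The windows of layer $\tau$ only involve positions in $A_\tau B_\tau$, so it suffices to fix $\tau$ and show that all $2K+1$ groups of that layer earn $H$ if and only if~\eqref{eq:consistent} holds for every $i$. I drop the index $\tau$ and write $\alpha_i=a_i+a'_i$ and $\beta_i=b_i+b'_i$.

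\emph{Window counts.} Reading the endpoints from Table~\ref{tab:windows}, I will express every window count through the suffix sums $S_i=\sum_{j\ge i}\alpha_j$ (with $S_{K+1}=0$) and the prefix sums $P_i=\sum_{j\le i}\beta_j$ (with $P_0=0$). The full row $A$ has count $S_1$, the full row $B$ has count $P_K$, and $E_i$ has count $S_i+P_{i-1}$. The window $F_i$ drops $a_i$ from the front and adds $b_i$ at the back, so its count is $S_i+P_{i-1}-a_i+b_i$.

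\emph{Forward direction.} Suppose every group earns $H$. Treat $A$ as ``$E_1$'' and $B$ as ``$E_{K+1}$''; both formulas match, since $S_1+P_0=S_1$ and $S_{K+1}+P_K=P_K$. Then $S_i+P_{i-1}=K$ for all $1\le i\le K+1$. Subtracting consecutive equations gives $\beta_i-\alpha_i=0$, so $\alpha_i=\beta_i$ for every $i$. Substituting $S_i+P_{i-1}=K$ into the $F_i$ count gives $K-a_i+b_i$. This lies in $\{K-1,K+1\}$ exactly when $a_i\ne b_i$, so $a_i+b_i=1$. Finally, $\alpha_i=\beta_i\in\{0,1,2\}$ rules out $0$ and $2$. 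Indeed, $\alpha_i=\beta_i=0$ would force $a_i=b_i=0$, and $\alpha_i=\beta_i=2$ would force $a_i=b_i=1$. Hence $\alpha_i=\beta_i=1$, which together with $a_i+b_i=1$ is~\eqref{eq:consistent}.

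\emph{Converse.} If the assignment is consistent, then all $\alpha_i=\beta_i=1$. So $S_i=K-i+1$ and $P_{i-1}=i-1$, and every full-row and $E$ count equals $K$. Each $F_i$ count is $K\mp1$, according to whether $a_i=1$ or $b_i=1$. Thus every group earns $H$.

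The only step needing care is reading the window contents correctly from the position formulas, in particular that $F_i$ is $E_i$ shifted right by one position. The telescoping and case check are then routine.
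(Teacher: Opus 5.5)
Your proposal is correct and follows essentially the paper's argument: the same window counts ($K$ on the full rows and the $E_i$ via partial sums, $K+b_i-a_i$ on $F_i$), telescoping to get $a_i+a'_i=b_i+b'_i$, and then using $a_i\ne b_i$ to force the remaining entries. Using suffix sums for $A_\tau$, treating the full rows as $E_1$ and $E_{K+1}$, and excluding $\alpha_i=\beta_i\in\{0,2\}$ instead of splitting on $(a_i,b_i)$ are only cosmetic differences.
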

\begin{proof}
Suppose that the consistency reward is $MH$. Every group must then contribute $H$. Fix a layer, omit $\tau$ from its variable names, and write
$$
 P_A(h)=\sum_{i=1}^h(a_i+a'_i),\qquad
 P_B(h)=\sum_{i=1}^h(b_i+b'_i),
 \qquad P_A(0)=P_B(0)=0.
$$
The two full-row rewards give $P_A(K)=P_B(K)=K$. Window $E_i$ contains pairs $i,\ldots,K$ of $A$ followed by pairs $1,\ldots,i-1$ of $B$. Its selected count is therefore
$$
 K-P_A(i-1)+P_B(i-1).
$$
To obtain its reward, this count must equal $K$. Applying this to $E_2,\ldots,E_K$ gives $P_A(h)=P_B(h)$ for $1\le h<K$; the equality also holds at $h=0,K$. Subtracting the equality at $h=i-1$ from the equality at $h=i$ yields
\begin{equation}
 a_i+a'_i=b_i+b'_i\quad(1\le i\le K).
 \label{eq:equal-pairs}
\end{equation}
Window $F_i$ starts at $a'_i$ and ends at $b_i$. Its count is
\begin{equation}
 K-P_A(i-1)-a_i+P_B(i-1)+b_i=K+b_i-a_i.
 \label{eq:F-count}
\end{equation}
For~\eqref{eq:complement-reward} to contribute $H$, this count must be $K-1$ or $K+1$, so $a_i\ne b_i$. If $(a_i,b_i)=(0,1)$, equation~\eqref{eq:equal-pairs} gives $a'_i=1+b'_i$, forcing $(a'_i,b'_i)=(1,0)$. If $(a_i,b_i)=(1,0)$, it gives $1+a'_i=b'_i$, forcing $(a'_i,b'_i)=(0,1)$. Thus~\eqref{eq:consistent} holds for each pair in this layer, and the same argument applies to every layer.

Now suppose that the assignment is consistent. Each pair contains one selected position, so $P_A(h)=P_B(h)=h$ and both full rows have count $K$. Every $E_i$ also has count $K$. Since $b_i-a_i\in\{-1,1\}$, equation~\eqref{eq:F-count} gives count $K-1$ or $K+1$ on every $F_i$. Each of the $M$ groups contributes $H$, giving consistency reward $MH$.
\end{proof}

\paragraph{Recovering the grid rewards.}
\label{subsec:grid-windows}

For a consistent assignment, equation~\eqref{eq:consistent} gives
$$
 (a_{\tau,i},a'_{\tau,i},b_{\tau,i},b'_{\tau,i})
 =(1-b_{\tau,i},b_{\tau,i},b_{\tau,i},1-b_{\tau,i}).
$$
The bit $b_{\tau,i}$ determines all four entries. Define the grid bit represented by these entries as
\begin{equation}
 x_{\tau,i}=b_{\tau,i}\oplus(\tau\bmod2).
 \label{eq:logical-bit}
\end{equation}
Here $\oplus$ is addition modulo two: $x_{\tau,i}=b_{\tau,i}$ in even layers and $x_{\tau,i}=1-b_{\tau,i}$ in odd layers. Equal grid bits in consecutive layers therefore have opposite $b$ values. The two end entries of the window below then sum to one when the grid bits agree, and to zero or two when they differ.

For each grid edge~\eqref{eq:adjacent-layers}, define
\begin{equation}
 W_{\tau,i,j}=[c_\tau+2K+2i,\;c_\tau+4K+2j-1].
 \label{eq:cross-window}
\end{equation}
This window starts at $b'_{\tau,i}$ in row $B_\tau$ and ends at $a_{\tau+1,j}$ in the next row $A_{\tau+1}$. Between these two entries, it contains $K-i$ complete pairs of $B_\tau$ and $j-1$ complete pairs of $A_{\tau+1}$. Each complete pair contributes one selection, so
\begin{align}
 \sum_{h\in W_{\tau,i,j}}y_h
 &= (1-b_{\tau,i})+(K-i)+(j-1)+a_{\tau+1,j}\notag\\
 &= K+j-i+(-1)^\tau(x_{\tau+1,j}-x_{\tau,i}).
 \label{eq:cross-count}
\end{align}
The second equality uses~\eqref{eq:logical-bit}. If $\tau$ is even, then $b_{\tau,i}=x_{\tau,i}$ and $a_{\tau+1,j}=x_{\tau+1,j}$. If $\tau$ is odd, these entries are $1-x_{\tau,i}$ and $1-x_{\tau+1,j}$, respectively.

Put $D=K+j-i\in\{K,K+1\}$. Equation~\eqref{eq:cross-count} gives count $D$ when the endpoint grid bits agree, and count $D-1$ or $D+1$ when they differ. For an $\EQ$ edge, put a reward of weight one at target count $D$. For an $\NEQ$ edge, put a reward of weight one at each of the target counts $D-1,D+1$. A window cannot have both counts, so the two $\NEQ$ rewards together contribute at most one, even for an inconsistent assignment. All target counts are positive because $K\ge2$.

For example, take $K=2$, an even layer $\tau$, and $i=j=1$. The selected count in $W_{\tau,1,1}$ is $2+x_{\tau+1,1}-x_{\tau,1}$:
$$
\begin{array}{c|cccc}
 (x_{\tau,1},x_{\tau+1,1})&(0,0)&(0,1)&(1,0)&(1,1)\\
 \hline
 \displaystyle\sum_{h\in W_{\tau,1,1}}y_h&2&3&1&2
\end{array}
$$
A reward at count $2$ therefore represents an $\EQ$ edge. Rewards at counts $1$ and $3$ represent an $\NEQ$ edge, with exactly one of them obtained when the bits differ.

Each original vertex has the reward $\ind{x_{\tau,i}=0}$. For a consistent assignment, equations~\eqref{eq:F-count} and~\eqref{eq:consistent} give count $K+2b_{\tau,i}-1$ on $F_{\tau,i}$. If $\tau$ is even, $x_{\tau,i}=0$ means $b_{\tau,i}=0$, giving count $K-1$. If $\tau$ is odd, it means $b_{\tau,i}=1$, giving count $K+1$. Add a reward of weight one to this window at count $K-1$ in even layers and at count $K+1$ in odd layers. Only original vertices receive this additional reward. The consistency reward on this window is at most $H$. The additional reward is $0$ or $1$ and is counted among the $n$ original-vertex rewards.

\begin{lemma}
\label{lem:window-value}
Let $\opt_{\mathrm{win}}(B)$ be the maximum total window reward over selections of exactly $B$ positions. Then
\begin{equation}
 \opt_{\mathrm{win}}(B)=MH+C+n+2\alpha(G).
 \label{eq:window-value}
\end{equation}
\end{lemma}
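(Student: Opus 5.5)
I will prove the two inequalities separately, splitting the total window reward into the consistency part (all weight-$H$ rewards, at most $MH$) and the grid part (all weight-one rewards on the cross windows $W_{\tau,i,j}$ and the extra original-vertex rewards on the $F_{\tau,i}$).

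\emph{Lower bound.} I take grid bits $x$ attaining $\opt_{\mathrm{grid}}$. Unused positions get arbitrary bits. I set $b_{\tau,i}=x_{\tau,i}\oplus(\tau\bmod 2)$ and fill in $a,a',b'$ from~\eqref{eq:consistent}. This assignment is consistent, so it selects exactly $B$ positions and, by Lemma~\ref{lem:consistency}, earns consistency reward $MH$. By~\eqref{eq:cross-count}, the count of each $W_{\tau,i,j}$ is $D$ when the endpoint bits agree and $D\pm1$ when they differ. Hence each $\EQ$ reward equals $\EQ(a,b)$ and each pair of $\NEQ$ rewards equals $\NEQ(a,b)$. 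By the computation of the $F_{\tau,i}$ counts for consistent assignments, the extra reward at an original vertex equals $\ind{x_v=0}$. So the grid part equals the grid reward of $x$, and by~\eqref{eq:grid-value} the total is $MH+C+n+2\alpha(G)$.

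\emph{Upper bound.} I take any selection of exactly $B$ positions. Suppose first that it is inconsistent. By Lemma~\ref{lem:consistency}, some group misses its reward. Each group contributes either $0$ or $H$: a full-row window and an $E$ window carry one weight-$H$ reward each, and the $F$ pair contributes at most $H$ as noted after~\eqref{eq:complement-reward}. So the consistency part is at most $(M-1)H$. The weight-one rewards sum to at most $U$, because each $\NEQ$ pair yields at most one (the two target counts exclude each other) and each other reward is $0$ or $1$. The total is therefore at most $(M-1)H+U<MH\le MH+C+n+2\alpha(G)$, since $H=U+1$.

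If instead the selection is consistent, it defines grid bits via~\eqref{eq:logical-bit}. By the same correspondence as in the lower bound, the grid part equals the grid reward of those bits, which is at most $\opt_{\mathrm{grid}}$. The total is then at most $MH+C+n+2\alpha(G)$.

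The main point needing care is that the weight-one rewards are bounded by $U$ for arbitrary, possibly inconsistent, selections. The grid edges contribute at most $Q$: each grid edge gets exactly one window, carrying either one $\EQ$ reward or a mutually exclusive $\NEQ$ pair. The extra vertex rewards contribute at most $n$. I also have to check that each extra vertex reward sits on the same window as a consistency group without affecting that group's $0$/$H$ accounting, which holds because the two are counted separately by weight. Together these give the equality~\eqref{eq:window-value}.
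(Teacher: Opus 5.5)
Your proposal is correct and follows essentially the same route as the paper: a consistent assignment built from optimal grid bits achieves $MH+\opt_{\mathrm{grid}}=MH+C+n+2\alpha(G)$, and by Lemma~\ref{lem:consistency} together with $H=U+1$, every inconsistent selection earns at most $(M-1)H+U=MH-1$. The only cosmetic difference is that you bound consistent selections by a separate $\le$ argument, whereas the paper states that consistent assignments correspond to grid-bit choices and takes the maximum over them directly.
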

\begin{proof}
For any choice of the grid bits, equation~\eqref{eq:logical-bit} determines $b_{\tau,i}$, and equation~\eqref{eq:consistent} determines the other three entries. The resulting assignment has exactly $B$ selected positions and consistency reward $MH$. By~\eqref{eq:cross-count}, each window for a grid edge gives the same reward as that edge. The added rewards on $F$ windows give the rewards at the original vertices. Its total reward is therefore $MH$ plus the total reward on the grid. Every consistent assignment determines grid bits in this way, so the maximum over consistent assignments is
$$
 MH+\opt_{\mathrm{grid}}=MH+C+n+2\alpha(G)
$$
by~\eqref{eq:grid-value}.

If an assignment is inconsistent, Lemma~\ref{lem:consistency} implies that at least one group contributes $0$. Its consistency reward is at most $(M-1)H$. Each grid edge and each original vertex contributes at most one reward of weight one, so the remaining reward is at most $Q+n=U$. Its total reward is at most
$$
 (M-1)H+U=MH-1.
$$
A consistent assignment of size $B$ exists for every choice of the grid bits and has reward at least $MH$. Hence every optimal assignment with exactly $B$ selections is consistent, and its value is~\eqref{eq:window-value}.
\end{proof}

\subsection{From windows to equal-length intervals}
\label{sec:intervals}

We represent each sequence position $h$ by an interval $I_h$ and each window by a query point. The query point belongs to $I_h$ exactly when its window contains $h$. The windows containing a fixed position must therefore be consecutive in the query order.

\paragraph{Ordering the windows.}

Consider the full-row windows, the $E$ and $F$ windows, and the windows for edges between consecutive layers. A window carrying several rewards is counted only once here. Let $s$ be the number of distinct windows.

\begin{lemma}
\label{lem:window-order}
The windows can be indexed as $W_q=[l_q,r_q]$, $1\le q\le s$, so that both sequences $(l_q)$ and $(r_q)$ are nondecreasing.
\end{lemma}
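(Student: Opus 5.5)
We will order the windows lexicographically by $(l_q,r_q)$, breaking ties arbitrarily for identical windows. With this order $(l_q)$ is nondecreasing automatically, so it suffices to prove the following pairwise property: \emph{for any two windows $W=[l,r]$ and $W'=[l',r']$ with $l<l'$, we have $r\le r'$}. Given this property, windows with equal left endpoints are sorted by right endpoint, and windows with smaller left endpoint never have larger right endpoint, so $(r_q)$ is nondecreasing as well.

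To check the property, we write all window endpoints relative to $c_\tau=4K\tau$, using Table~\ref{tab:windows} and~\eqref{eq:cross-window}:
\begin{itemize}
\item the full rows $A_\tau,B_\tau$ and the windows $E_{\tau,i},F_{\tau,i}$ all have length exactly $2K$, so each satisfies $r=l+2K-1$;
\item $W_{\tau,i,j}$ has left endpoint $L=c_\tau+2K+2i$ and right endpoint $L+2K-1+2(j-i)$, with $j-i\in\{0,1\}$.
\end{itemize}
For two windows of length $2K$, the property is immediate, since $r$ is an increasing function of $l$.

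The main case is a cross window $W_{\tau,i,j}$ compared with a window $X$ of length $2K$ or with another cross window. We locate all left endpoints. Within layer $\tau$, the $E$ and $F$ windows have left endpoints in $[c_\tau+1,c_\tau+2K]$ and $B_\tau$ starts at $c_\tau+2K+1$. The cross windows starting in layer $\tau$ have even offsets $L\in[c_\tau+2K+2,\,c_\tau+4K]$. All windows of layer $\tau+1$ start at $c_{\tau+1}+1$ or later.
\begin{itemize}
\item If $l_X<L$, then $r_X=l_X+2K-1\le L+2K-2<r_W$.
\item If $l_X>L$, we need $r_X\ge r_W$, which is delicate only when $j=i+1$, where $r_W=L+2K+1$. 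In that case $i<K$, so $L\le c_\tau+4K-2$. The position $L+1$ is odd and is not the left endpoint of any window, because no layer-$\tau$ window starts beyond $c_\tau+2K+1$ except cross windows (even offsets), and layer $\tau+1$ starts at $c_\tau+4K+1>L+1$. Hence $l_X\ge L+2$, and $r_X\ge L+2K+1=r_W$.
\item For two cross windows with different left endpoints, the left endpoints differ by at least $2$, while the right endpoints lie within $L+2K-1$ and $L+2K+1$. So the right endpoints are also ordered; across layers the same comparison applies with the larger gap between consecutive layers.
\end{itemize}

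The main obstacle is this bookkeeping for the length-$(2K+2)$ cross windows, specifically ruling out a window that starts at $L+1$. The parity observation above (cross windows start at even offsets, and the only odd starts near them are $B_\tau$ and the next layer's $A$ and $E$ windows) handles it.
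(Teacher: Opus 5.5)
Your proof is correct, but it takes a different route from the paper's. The paper writes down an explicit order. Within layer $\tau$ it lists $A_\tau,F_{\tau,1},E_{\tau,2},F_{\tau,2},\ldots,E_{\tau,K},F_{\tau,K},B_\tau$, whose left and right endpoints run through $c_\tau+1,\ldots,c_\tau+2K+1$ and $c_\tau+2K,\ldots,c_\tau+4K$. It then appends the cross windows $W_{\tau,i,j}$ by increasing $i$ and then $j$, using $j\in\{i,i+1\}$ to keep $j$ nondecreasing. Finally it checks the two layer boundaries ($B_\tau$ against the first cross window, and the last cross window against $A_{\tau+1}$) by comparing extreme endpoints.

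You instead sort by $(l,r)$ and reduce the lemma to the pairwise property that $l<l'$ forces $r\le r'$. You verify this from the window lengths: every window has length $2K$ except $W_{\tau,i,i+1}$, which has length $2K+2$. You add the parity observation that no window starts at the odd position $L+1$ right after a cross-window start $L$; this correctly covers earlier layers too, since those windows start at most at $c_\tau$.

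Your version is order-free. Because the property is pairwise, deleting windows for absent grid edges is automatically harmless, and it shows that plain sorting by left endpoint already works. The paper's version avoids the length and parity case analysis: once the list is written down, only consecutive entries and two boundary comparisons need checking, and it hands the interval construction a concrete order.
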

\begin{proof}
Within layer $\tau$, list the windows in the order
\begin{equation}
 A_\tau,F_{\tau,1},E_{\tau,2},F_{\tau,2},\ldots,
 E_{\tau,K},F_{\tau,K},B_\tau.
 \label{eq:within-layer-order}
\end{equation}
Their left endpoints are $c_\tau+1,\ldots,c_\tau+2K+1$, and their right endpoints are $c_\tau+2K,\ldots,c_\tau+4K$. Both increase along the list.

After $B_\tau$, list the windows~\eqref{eq:cross-window} by increasing $i$, and by increasing $j$ when $i$ is fixed. Since $j\in\{i,i+1\}$, the $j$ values are also nondecreasing. Both endpoint sequences are therefore nondecreasing.

The smallest possible endpoints of these windows are $c_\tau+2K+2$ and $c_\tau+4K+1$, both larger than the corresponding endpoints of $B_\tau$. Their largest possible endpoints are $c_\tau+4K$ and $c_\tau+6K-1$. These are smaller than the endpoints $c_\tau+4K+1$ and $c_\tau+6K$ of $A_{\tau+1}$. Thus the same order works across successive layers. Removing windows for absent grid edges does not disturb either endpoint sequence.
\end{proof}

Fix a position $h$. A window contains $h$ precisely when its right endpoint is at least $h$ and its left endpoint is at most $h$. Define
\begin{equation}
 \alpha_h=\min\{q:r_q\ge h\},\qquad
 \beta_h=\max\{q:l_q\le h\}.
 \label{eq:incidence-bounds}
\end{equation}
Every position belongs to a full-row window, so both indices exist and $\alpha_h\le\beta_h$. Since the right endpoints are nondecreasing, $r_q\ge h$ holds exactly for $q\ge\alpha_h$. Since the left endpoints are nondecreasing, $l_q\le h$ holds exactly for $q\le\beta_h$. Combining the two conditions gives
\begin{equation}
 \{q:h\in W_q\}=\{\alpha_h,\alpha_h+1,\ldots,\beta_h\}.
 \label{eq:consecutive-incidence}
\end{equation}
Thus $\alpha_h$ and $\beta_h$ are the indices of the first and last windows containing $h$. As $h$ increases, the smallest index with $r_q\ge h$ cannot decrease, and neither can the largest index with $l_q\le h$. Hence
$$
 \alpha_h\le\alpha_{h+1},\qquad
 \beta_h\le\beta_{h+1}\qquad(1\le h<N).
$$

Put the query for $W_q$ at $2q$. To make $I_h$ contain exactly the queries from $2\alpha_h$ to $2\beta_h$, choose
\begin{equation}
 I_h=\left[2\alpha_h-1+\frac{h}{N+1},\;
             2\beta_h+\frac{h}{N+1}\right].
 \label{eq:interval-realization}
\end{equation}
Its left endpoint lies strictly between $2\alpha_h-1$ and $2\alpha_h$, and its right endpoint lies strictly between $2\beta_h$ and $2\beta_h+1$. Therefore
\begin{equation}
 2q\in I_h\quad\Longleftrightarrow\quad
 \alpha_h\le q\le\beta_h
 \quad\Longleftrightarrow\quad h\in W_q.
 \label{eq:incidence-preserved}
\end{equation}
For $S(y)=\{I_h:y_h=1\}$, counting the selected intervals at each query gives
\begin{equation}
 \dep_{S(y)}(2q)=\sum_{h\in W_q}y_h,
 \qquad |S(y)|=\sum_{h=1}^Ny_h.
 \label{eq:depth-identity}
\end{equation}
Assign all rewards of $W_q$ to its query at $2q$. A selection of positions and the corresponding selection of intervals have the same total reward and size.

The term $h/(N+1)$ separates endpoints that would otherwise coincide. Because $\alpha_h$ and $\beta_h$ are nondecreasing, both endpoint sequences in~\eqref{eq:interval-realization} are strictly increasing. Every left endpoint lies in an interval $(2u-1,2u)$ and every right endpoint lies in an interval $(2v,2v+1)$, with $u,v\in\Z$. Thus no left endpoint equals a right endpoint. All endpoints are distinct, and no interval contains another.

In the example shown in Figure~\ref{fig:windows-intervals}, take $N=4$ positions and the windows $W_1=[1,2]$, $W_2=[2,3]$, and $W_3=[3,4]$. Formula~\eqref{eq:interval-realization} gives
$$
\begin{aligned}
 I_1&=[1.2,2.2], & I_2&=[1.4,4.4],\\
 I_3&=[3.6,6.6], & I_4&=[5.8,6.8].
\end{aligned}
$$
For position $h=2$, the first and last windows containing it are $W_1$ and $W_2$, so $\alpha_2=1$ and $\beta_2=2$. The interval $I_2=[1.4,4.4]$ contains the queries $2$ and $4$, but not $6$. Selecting positions $2$ and $3$ gives window counts $1,2,1$, and selecting $I_2$ and $I_3$ gives these same depths at $2,4,6$.

\begin{figure}[htbp]
\centering
\begin{tikzpicture}[x=1.2cm,y=0.7cm,line width=0.8pt,font=\small,
  every node/.style={inner sep=0pt}]
  \useasboundingbox (-0.5,-2.1) rectangle (7.5,5);
  \foreach \q in {2,4,6}
    \draw[gray!55,densely dashed,very thin] (\q,0.15) -- (\q,4.3);
  \foreach \l/\r/\y/\h in {1.2/2.2/4/1,1.4/4.4/3/2,3.6/6.6/2/3,5.8/6.8/1/4}{
    \draw (\l,\y) -- (\r,\y);
    \draw (\l,\y-0.12) -- (\l,\y+0.12);
    \draw (\r,\y-0.12) -- (\r,\y+0.12);
    \node at (0.35,\y) {$I_{\h}$};
  }
  \foreach \q/\k in {2/1,4/2,6/3}{
    \fill (\q,0) circle (1.6pt);
    \node at (\q,-1) {$\q\;(W_{\k})$};
  }
\end{tikzpicture}
\caption{Windows as intervals.}
\label{fig:windows-intervals}
\end{figure}

\paragraph{Integer endpoints and equal lengths.}

The intervals constructed above may have different lengths. Bogart and West~\cite{BogartWest1999} showed that, when no interval contains another, the intervals can be replaced by unit intervals without changing the order of their endpoints. For the next step, we also need integer endpoints and a polynomial bound on the distance between the extreme endpoints.

\begin{lemma}
\label{lem:equal-length}
Suppose $N$ intervals have distinct endpoints, and their left and right endpoints increase in the same order. In polynomial time, they can be replaced by intervals with integer endpoints and common length $L=N+1$, preserving the strict order of all endpoints. The distance between the smallest and largest new endpoints is $O(N^2)$. Query points that are not original endpoints can be moved so that each point is still contained in exactly the same intervals.
\end{lemma}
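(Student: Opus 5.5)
Sort all $2N$ endpoints. Because the left endpoints $\lambda_1<\dots<\lambda_N$ and the right endpoints $\rho_1<\dots<\rho_N$ appear in the same order (interval $h$ is $h$-th in both lists), the whole combinatorial structure is one word of length $2N$ in $N$ letters $\mathsf L$ and $N$ letters $\mathsf R$. In this word the $h$-th $\mathsf L$ precedes the $h$-th $\mathsf R$. Let $p_h$ be the number of right endpoints smaller than $\lambda_h$. Then $p_h$ is nondecreasing in $h$, and $p_h\le h-1$, since $\rho_h>\lambda_h$. Any new placement that reproduces this word preserves the strict order of all endpoints. It therefore also preserves which intervals contain a point lying in any given gap between consecutive endpoints.

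I would use the explicit placement
\[
 \lambda'_h=(N+1)(h-1)+p_h,\qquad \rho'_h=\lambda'_h+L=(N+1)h+p_h,\qquad L=N+1.
\]
These are integers, and all of them lie in $[0,(N+1)N+N]$, so the spread is $O(N^2)$. Computing the $p_h$ takes one sort, which is polynomial time. The main step is to check that the endpoint order is reproduced. This means three things.

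First, the left endpoints are strictly increasing: $\lambda'_{h+1}-\lambda'_h=N+1+(p_{h+1}-p_h)>0$. The same holds for the right endpoints. Second, $\rho'_k<\lambda'_h$ must hold exactly when $k\le p_h$. Third, no right endpoint may equal a left endpoint.

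For the second and third points, compare $\rho'_k=(N+1)k+p_k$ with $\lambda'_h=(N+1)(h-1)+p_h$. Both offsets $p_k,p_h$ lie in $[0,N-1]$, so the leading terms decide the comparison whenever $k\ne h-1$.
\begin{itemize}
 \item If $k\le h-2$, then $\rho'_k<\lambda'_h$.
 \item If $k\ge h$, then $\rho'_k>\lambda'_h$.
\end{itemize}
These agree with the original order, because $k\ge h$ gives $\rho_k\ge\rho_h>\lambda_h$. For $k\le h-2$ the original inequality $\rho_k<\lambda_h$ need not hold, and I return to this below.

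The delicate case is $k=h-1$. Here we compare $p_{h-1}$ with $p_h$. If $\rho_{h-1}<\lambda_h$, then $p_h\ge h-1>p_{h-1}$, because $p_{h-1}\le h-2$. So the new order is correct, and the endpoints are distinct. If instead $\rho_{h-1}>\lambda_h$, then $p_h\le h-2$, but nothing forces $p_h<p_{h-1}$. So this simple formula does not work as it stands.

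This is the main obstacle, together with the cases $k\le h-2$ where $\rho_k>\lambda_h$. Placing interval $h$ in its own block of width $N+1$ is too rigid. I would replace it with a placement driven by the ranks in the merged word. Let $r(e)$ be the rank of endpoint $e$ among the $2N$ endpoints. The task is to find integers $\lambda'_h$ with $\lambda'_{h+1}>\lambda'_h$ such that the positions of the right endpoints $\lambda'_k+L$ interleave correctly with the left endpoints.

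This is a system of difference constraints on the $\lambda'_h$. Each constraint has the form $\lambda'_h-\lambda'_k>L$ or $\lambda'_h-\lambda'_k<L$, and each reads off the word. It is feasible over the reals by Bogart and West~\cite{BogartWest1999}. By scaling, it is feasible with strict inequalities replaced by gaps of at least $1$. I would then solve it by shortest paths (Bellman--Ford) on the constraint graph. The constraint graph has weights in $\{L-1,\,-(L+1),\,-1\}$ over $N$ nodes. Shortest-path values therefore give an integer solution, and every entry has absolute value at most $N(L+1)=O(N^2)$.

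The remaining points are routine. A negative cycle would contradict feasibility, which follows from the real solution by a suitable scaling of $L$; choosing $L=N+1$ large relative to the unit gaps is what makes this scaling consistent. Finally, each query point lying strictly between two consecutive endpoints is moved to any point strictly between the corresponding new endpoints. A gap of at least one guarantees such points exist; if integrality of the query points is needed, I would first double all coordinates.
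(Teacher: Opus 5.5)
\textbf{Gap: the no-negative-cycle step for $L=N+1$.} Your framework is the paper's: difference constraints read off the endpoint word, integer weights $Lc-1$ with $c\in\{-1,0,1\}$, shortest paths, the bound $N(L+1)$, and moving queries into gaps. But the one step that carries the content is not proved. You need the integer system $t_v-t_u\le Lc-1$ with the specific value $L=N+1$ to have no negative cycle, and you justify this by scaling a Bogart--West solution.

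That does not work. Let $s$ be a unit-interval solution and let $\sigma=\min\bigl(c-(s_v-s_u)\bigr)>0$ be its smallest slack. Multiplying by $\mu$ gives gaps of at least one only when $\mu\ge1/\sigma$. The common length then becomes $\mu$, not $N+1$. Nothing controls $\sigma$, so this gives unit gaps only for some uncontrolled length, which need not be $N+1$ or even polynomial. Scaling cannot fix the length and the unit gaps at the same time, and the remark that $L=N+1$ is ``large relative to the unit gaps'' does not supply the missing argument. As written, neither the claimed length $L=N+1$ nor the $O(N^2)$ spread follows.

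\textbf{The missing idea: integrality on cycles.} Take any simple directed cycle of the constraint graph, with $h\le N$ edges. Sum the strict inequalities $s_v-s_u<c$ satisfied by the real solution along the cycle. The left side telescopes to $0$, so $0<\sum c$. Since every $c$ is an integer, $\sum c\ge1$. The same cycle in the integer system therefore has weight
\[
 L\sum c-h\ \ge\ (N+1)-N\ =\ 1\ >\ 0 .
\]
This is exactly where the choice $L=N+1$ enters. With this argument in place, the rest of your proof goes through: Bellman--Ford gives integer values with $|t_i|\le N(L+1)$, and the queries move into the corresponding gaps.

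\textbf{A smaller point.} Drop the suggestion to double all coordinates. It would change the length to $2L$, contradicting the statement. It is also unnecessary, since every gap between distinct integer endpoints contains a half-integer.
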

\begin{proof}
By~\cite{BogartWest1999}, unit intervals $[s_i,s_i+1]$ with the same endpoint order exist. A comparison between endpoints of different intervals requires $s_i+\varepsilon<s_j+\delta$, where $\varepsilon,\delta\in\{0,1\}$ and $i\ne j$. Equivalently, $s_i-s_j<\delta-\varepsilon$, so every required comparison has the form
\begin{equation}
 s_v-s_u<c,\qquad c\in\{-1,0,1\}.
 \label{eq:strict-constraints}
\end{equation}
We can write these $O(N^2)$ inequalities from the original endpoint order. The numbers $s_i$ show that they have a solution; we construct integer coordinates directly from the inequalities.

Introduce integer unknowns $t_1,\ldots,t_N$ and replace each inequality by
\begin{equation}
 t_v-t_u\le Lc-1,\qquad L=N+1.
 \label{eq:integer-constraints}
\end{equation}
Represent this inequality by a directed edge $u\to v$ of weight $Lc-1$. Consider a directed cycle with no repeated vertices and $h\le N$ edges. Summing the corresponding strict inequalities~\eqref{eq:strict-constraints} gives $0<\sum c$. Since this sum is an integer, the total weight of the new cycle is
$$
 L\sum c-h\ge L-N=1.
$$
There is therefore no directed cycle of negative total weight.

Add a new vertex with a zero-weight directed edge to every existing vertex, and let $t_i$ be the shortest distance from the new vertex to $i$. For an edge $u\to v$ of weight $Lc-1$, a shortest path to $u$ followed by this edge gives $t_v\le t_u+(Lc-1)$. Thus the distances satisfy~\eqref{eq:integer-constraints}. They are integers because all edge weights are integers. A shortest path can be chosen with at most $N$ edges, each of absolute weight at most $L+1$, giving
$$
 -N(L+1)\le t_i\le0.
$$

Set $J_i=[t_i,t_i+L]$ and add the same integer to every endpoint to make them nonnegative. Each inequality~\eqref{eq:integer-constraints} preserves the strict order of the corresponding endpoints, with a separation of at least one. Each interval has length $L>0$, so its left endpoint precedes its right endpoint. The distance between the smallest and largest endpoints is $O(N^2)$.

Each query originally lies between two consecutive endpoints or outside all endpoints. Moving it to the corresponding gap in the new endpoint order leaves it in exactly the same intervals. Since the new endpoints are distinct integers, each bounded gap contains a half-integer $h+\tfrac12$, $h\in\Z$, at which we can place the query. Queries outside all endpoints can be kept on the corresponding side. Writing down the inequalities and finding the shortest distances both take polynomial time.
\end{proof}

For the four intervals in Figure~\ref{fig:windows-intervals}, one possible replacement with $L=N+1=5$ is
$$
 J_1=[0,5],\qquad J_2=[3,8],\qquad
 J_3=[7,12],\qquad J_4=[10,15].
$$
These intervals have the same endpoint order as $I_1,I_2,I_3,I_4$. Move the queries $2,4,6$ to $4.5,7.5,11.5$, respectively. The point $4.5$ belongs to $J_1,J_2$, the point $7.5$ to $J_2,J_3$, and the point $11.5$ to $J_3,J_4$, exactly as before. For instance, the original order $1.2<1.4<2.2$ requires $s_1<s_2<s_1+1$. The corresponding integer inequalities are $t_1-t_2\le-1$ and $t_2-t_1\le4$, which are satisfied by $t_1=0,t_2=3$.

Applying Lemma~\ref{lem:equal-length} to the intervals in~\eqref{eq:interval-realization} gives intervals with integer endpoints and a common positive integer length. Each query remains in exactly the same intervals and is not an endpoint. Thus~\eqref{eq:depth-identity} still holds, and the maximum reward with $B$ selections is still~\eqref{eq:window-value}. The equal lengths will let us convert the exact rewards into coverage demands.

\subsection{From exact rewards to coverage demands}
\label{sec:demands}

A reward for depth exactly $d$ is lost when the depth exceeds $d$, whereas a point remains fully covered once its demand is met. We add the same linear term in every integer cell to make the reward functions nondecreasing. We then create demand points so that the number of fully covered points equals the new reward. For selections of exactly $B$ equal-length intervals, the total added term is fixed.

\paragraph{Total coverage depth.}

Consider $N$ intervals with integer endpoints and common length $L\in\Z_{>0}$, a budget $1\le B\le N$, and rewards at query points that are not interval endpoints. Each reward has a positive integer weight and is obtained at a specified positive integer depth. Let $e_{\min}$ and $e_{\max}$ be the smallest and largest endpoints. We call each open interval $(h,h+1)$, $h\in\Z$, a cell. The midpoints $h+\tfrac12$ of all cells between the extreme endpoints form the set
$$
 \calQ=\{h+\tfrac12:h\in\Z,
   \ e_{\min}\le h<e_{\max}\}.
$$
Each interval contains $L$ of these midpoints, so summing their depths counts each selected interval $L$ times:
\begin{equation}
 \sum_{q\in\calQ}\dep_S(q)
 =\sum_{I\in S}|I\cap\calQ|=L|S|.
 \label{eq:constant-depth}
\end{equation}
For a selection of exactly $B$ intervals, this sum is $LB$.

In Figure~\ref{fig:total-depth}, the intervals $I_1=[0,3]$ and $I_2=[1,4]$ both have length $3$. Selecting both gives depths $1,2,2,1$ at the cell midpoints, whose sum is $6=3\cdot2$. Each interval contributes three, including the midpoints in the overlap.

\begin{figure}[htbp]
\centering
\begin{tikzpicture}[x=1.7cm,y=1cm,font=\small,line width=0.8pt]
  \useasboundingbox (-1.3,-1.4) rectangle (4.5,3.5);
  \foreach \h in {0,...,4}
    \draw[gray!25,very thin] (\h,0.15) -- (\h,2.85);
  \draw (0,2.6) -- (3,2.6);
  \fill (0,2.6) circle (2pt);
  \fill (3,2.6) circle (2pt);
  \node at (-0.7,2.6) {$I_1$};
  \draw (1,1.6) -- (4,1.6);
  \fill (1,1.6) circle (2pt);
  \fill (4,1.6) circle (2pt);
  \node at (-0.7,1.6) {$I_2$};
  \draw (0,0) -- (4,0);
  \foreach \h in {0,...,4} {
    \draw (\h,-0.08) -- (\h,0.08);
    \node at (\h,-0.65) {$\h$};
  }
  \foreach \q in {0.5,1.5,2.5,3.5}
    \fill (\q,0) circle (1.6pt);
  \node at (-0.7,0.7) {depth};
  \foreach \q/\z in {0.5/1,1.5/2,2.5/2,3.5/1}
    \node at (\q,0.7) {$\z$};
\end{tikzpicture}
\caption{Total depth.}
\label{fig:total-depth}
\end{figure}

Discard rewards at query points outside the endpoint range, since these points always have depth zero and cannot earn a reward with a positive target depth. Move each remaining query to a half-integer between the same two consecutive interval endpoints, so that the same intervals still contain it. If several queries move to one cell, retain all their rewards there.

Let $\mathcal R$ list all original reward pairs $(w,d)$, including the discarded rewards and counting repeated rewards separately. Let $\mathcal R_q$ list those moved to the cell represented by $q$. For $0\le z\le B$, define the reward in this cell and the total weight of all original rewards by
$$
 f_q(z)=\sum_{(w,d)\in\mathcal R_q}w\ind{z=d},
 \qquad
 \lambda=\sum_{(w,d)\in\mathcal R}w.
$$
A cell with no rewards has $f_q\equiv0$. All target depths are positive, so
\begin{equation}
 f_q(0)=0,\qquad 0\le f_q(z)\le\lambda.
 \label{eq:local-reward-bounds}
\end{equation}

\paragraph{Representing rewards by demand points.}

We add $\lambda z$ in every cell and write $g_q(z)=f_q(z)+\lambda z$. Since $f_q$ can decrease by at most $\lambda$ in one step, $g_q$ is nondecreasing. For $1\le d\le B$, its increase from depth $d-1$ to depth $d$ is
\begin{equation}
 a_{q,d}=g_q(d)-g_q(d-1)
        =\lambda+f_q(d)-f_q(d-1).
 \label{eq:reward-increments}
\end{equation}
By~\eqref{eq:local-reward-bounds},
$$
 a_{q,d}\ge\lambda-f_q(d-1)\ge0.
$$
Thus all $a_{q,d}$ are nonnegative integers. Since $g_q(0)=0$, summing its increases up to depth $z$ gives
\begin{equation}
 g_q(z)=\sum_{d=1}^B a_{q,d}\ind{z\ge d}
 \qquad(0\le z\le B).
 \label{eq:threshold-sum}
\end{equation}
Create $a_{q,d}$ points of demand $d$, each of value one. At depth $z$, the points with $d\le z$ are fully covered, giving $g_q(z)$ points in total.

For $q=h+\tfrac12$, place these points inside $(h,h+1)$. An interval with integer endpoints contains either the whole open cell or none of it, so every point in the cell has depth $\dep_S(q)$. To make their coordinates distinct, set $m_q=\sum_{d=1}^B a_{q,d}$ and put them at
\begin{equation}
 h+\frac{r}{m_q+1},\qquad r=1,\ldots,m_q.
 \label{eq:distinct-points}
\end{equation}

In Table~\ref{tab:reward-conversion}, take $B=3$, $f_q(z)=\ind{z=2}$, and $\lambda=1$. The values $a_{q,1}=1$ and $a_{q,2}=2$ give three points with demands $1,2,2$. Place them at $h+\tfrac14$, $h+\tfrac12$, and $h+\tfrac34$, respectively. At depth one only the first point is fully covered; at depths two and three all three are fully covered.

\begin{table}[htbp]
\centering
\caption{Reward conversion.}
\label{tab:reward-conversion}
\begin{tabular}{@{}lrrrr@{}}
\toprule
$z$ & $0$ & $1$ & $2$ & $3$\\
\midrule
$f_q(z)$ & $0$ & $0$ & $1$ & $0$\\
$g_q(z)$ & $0$ & $1$ & $3$ & $3$\\
$a_{q,z}$ & --- & $1$ & $2$ & $0$\\
\bottomrule
\end{tabular}
\end{table}

For a selection $S$ with $|S|\le B$, define its total reward before this conversion by
$$
 F_{\mathrm{ex}}(S)=\sum_{q\in\calQ}f_q(\dep_S(q)).
$$
The number $F_{\mathrm{cov}}(S)$ of fully covered new points is
$$
 F_{\mathrm{cov}}(S)=\sum_{q\in\calQ}g_q(\dep_S(q)).
$$
For $|S|=B$, equation~\eqref{eq:constant-depth} gives
\begin{align}
 F_{\mathrm{cov}}(S)
 &=\sum_{q\in\calQ}g_q(\dep_S(q))\notag\\
 &=\sum_{q\in\calQ}f_q(\dep_S(q))
        +\lambda\sum_{q\in\calQ}\dep_S(q)\notag\\
 &=F_{\mathrm{ex}}(S)+\lambda LB.
 \label{eq:constant-shift}
\end{align}
For selections of exactly $B$ intervals, the two objectives differ by the fixed value $\lambda LB$.

\begin{lemma}
\label{lem:reward-conversion}
Consider $N$ intervals of common positive integer length $L$, with integer endpoints and a budget $1\le B\le N$. Assume that the distance between the smallest and largest endpoints is bounded by a polynomial in the input size. Each reward has a positive integer weight, is assigned to a query point that is not an interval endpoint, and is obtained at a specified positive integer depth. If the total reward weight $\lambda$ is also bounded by a polynomial in the input size, the construction above gives an unweighted partial interval multicover instance of polynomial size with the same candidate intervals. For every nonnegative integer $Z$,
$$
\begin{aligned}
 &\exists S:\ |S|=B,\quad F_{\mathrm{ex}}(S)\ge Z\\
 &\hspace{1.7em}\Longleftrightarrow
 \exists S:\ |S|\le B,\quad
 F_{\mathrm{cov}}(S)\ge Z+\lambda LB.
\end{aligned}
$$
\end{lemma}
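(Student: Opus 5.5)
The plan has three parts: a size bound for the constructed instance, the easy direction of the equivalence, and the converse. The converse is the only delicate one, because of the mismatch between ``$|S|=B$'' on the left and ``$|S|\le B$'' on the right.

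\textbf{Polynomial size.} The midpoint set $\calQ$ has $e_{\max}-e_{\min}$ elements, which is polynomial by assumption. In each cell $q$,
$$m_q=\sum_{d=1}^B a_{q,d}=g_q(B)-g_q(0)=f_q(B)+\lambda B\le\lambda(B+1).$$
So the total number of new points is at most $|\calQ|\lambda(B+1)$, which is polynomial because $\lambda$ is. The point coordinates~\eqref{eq:distinct-points} are rationals with polynomially bounded numerators and denominators. The intervals, $B$ and $p=Z+\lambda LB$ are written directly, so the instance has polynomial size. Distinctness of the points holds because different cells are disjoint and~\eqref{eq:distinct-points} separates the points inside one cell. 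By~\eqref{eq:threshold-sum}, the number of fully covered points in cell $q$ equals $g_q(\dep_S(q))$, which justifies the formula for $F_{\mathrm{cov}}$.

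\textbf{Forward direction.} Take $S$ with $|S|=B$ and $F_{\mathrm{ex}}(S)\ge Z$. Then~\eqref{eq:constant-shift} gives $F_{\mathrm{cov}}(S)=F_{\mathrm{ex}}(S)+\lambda LB\ge Z+\lambda LB$, and $|S|=B\le B$, so the same $S$ works.

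\textbf{Converse.} Take $S$ with $|S|\le B$ and $F_{\mathrm{cov}}(S)\ge Z+\lambda LB$. The key point is that $F_{\mathrm{cov}}$ is monotone under adding intervals, since each $g_q$ is nondecreasing and depths only increase. Because $B\le N$, we can extend $S$ by arbitrary unselected intervals to some $S'\supseteq S$ with $|S'|=B$. Then $F_{\mathrm{cov}}(S')\ge F_{\mathrm{cov}}(S)$. Every depth stays at most $|S'|=B$, so the domain $0\le z\le B$ of $f_q$ and $g_q$ is respected. Applying~\eqref{eq:constant-shift} to $S'$ gives
$$F_{\mathrm{ex}}(S')=F_{\mathrm{cov}}(S')-\lambda LB\ge Z.$$
The identity~\eqref{eq:constant-depth}, which makes the added linear term constant, needs exactly $B$ intervals; this padding step is what supplies it.

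It remains to confirm that $F_{\mathrm{ex}}$, written as a sum over cells, equals the original exact-count reward. The discarded rewards lie at queries of depth zero with positive targets, so they never contribute. Each moved query keeps the same containing intervals, so its depth, and hence its reward, is unchanged.
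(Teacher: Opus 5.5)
Your proposal is correct and matches the paper's proof essentially step for step. Both use the constant shift~\eqref{eq:constant-shift} for the forward direction, pad $S$ to exactly $B$ intervals and use monotonicity of full coverage for the converse, and bound the size by $m_q=g_q(B)\le\lambda(B+1)$ per cell. Your extra checks are handled by the paper in the text preceding the lemma rather than in its proof: distinctness of points, the depth domain, and that $F_{\mathrm{ex}}$ still equals the original reward after queries are discarded or moved.
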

\begin{proof}
If $|S|=B$ and $F_{\mathrm{ex}}(S)\ge Z$, equation~\eqref{eq:constant-shift} gives $F_{\mathrm{cov}}(S)\ge Z+\lambda LB$.

Now suppose that $|S|\le B$ and $F_{\mathrm{cov}}(S)\ge Z+\lambda LB$. Since $B\le N$, extend $S$ to a selection $S'$ of exactly $B$ intervals. Adding intervals cannot decrease the number of fully covered points. Thus $F_{\mathrm{cov}}(S')\ge Z+\lambda LB$, and~\eqref{eq:constant-shift} gives
$$
 F_{\mathrm{ex}}(S')=F_{\mathrm{cov}}(S')-\lambda LB\ge Z.
$$

The assumed bound on the distance between the extreme endpoints gives polynomially many cells. In each cell, the number of new points is
$$
 m_q=\sum_{d=1}^B a_{q,d}=g_q(B)
     =f_q(B)+\lambda B\le\lambda(B+1).
$$
Hence there are polynomially many points in total. Their demands are at most $B$, and the coordinates in~\eqref{eq:distinct-points} have polynomial bit length. Every new point has value one, and every interval has cost one and contributes one to the coverage depth at each point it contains.
\end{proof}

\subsection{Completing the reduction}
\label{sec:hardness}

\begin{proof}[Proof of Theorem~\ref{thm:main}]
We use the usual binary encoding of rational coordinates and integer input parameters. A certificate lists at most $\min\{B,N\}$ distinct interval indices. For each point of $X$, compare its coordinate with the endpoints of the selected intervals, count how many contain it, and test whether its demand is met. We can then check~\eqref{eq:decision} in polynomial time in the input length. Hence the problem is in NP.

For NP-hardness, start with $(G,t)$ from Subsection~\ref{sec:grid}. The construction in Subsection~\ref{sec:windows} uses budget $B=N/2$, and Lemma~\ref{lem:window-value} gives
$$
 \opt_{\mathrm{win}}(B)=MH+C+n+2\alpha(G).
$$
The intervals constructed in Subsection~\ref{sec:intervals} have integer endpoints and length $L=N+1$, with distance $O(N^2)$ between the extreme endpoints. The depth at each query equals its window count, so
$$
 \max_{|S|=B}F_{\mathrm{ex}}(S)
 =\opt_{\mathrm{win}}(B)
 =MH+C+n+2\alpha(G).
$$
Apply Lemma~\ref{lem:reward-conversion} with $\lambda$ equal to the sum of the original reward weights and $Z=MH+C+n+2t$. Set the required number of fully covered points to
\begin{equation}
 p=\lambda LB+MH+C+n+2t.
 \label{eq:target}
\end{equation}

Adding intervals cannot decrease the number of fully covered points, and any selection of at most $B$ intervals can be extended to size $B$. Thus~\eqref{eq:constant-shift} gives
$$
 \max_{|S|\le B}F_{\mathrm{cov}}(S)
 =\max_{|S|=B}F_{\mathrm{cov}}(S)
 =\lambda LB+MH+C+n+2\alpha(G).
$$
The constructed instance has a selection of at most $B$ intervals fully covering at least $p$ points if and only if
$$
 \lambda LB+MH+C+n+2\alpha(G)
 \ge \lambda LB+MH+C+n+2t,
$$
which is equivalent to $\alpha(G)\ge t$. The constant additions are $C$ after optimizing the added path bits, $MH$ for consistent window assignments, and $\lambda LB$ for every selection of $B$ intervals.

We finally bound the output size. The grid construction has $Q=O(n^2)$ and $K,T=O(n)$, so
$$
 N,B,M=O(n^2),\qquad U,H=O(n^2).
$$
The window construction has $O(n^2)$ rewards, each of weight at most $H$, so $\lambda=O(n^4)$. There are $O(N^2)$ cells, each containing at most $\lambda(B+1)$ new points. The total number of demand points is therefore
$$
 O(N^2\lambda B)=O(n^{10}).
$$
Their coordinates and demands have polynomial bit length, and each construction can be performed in polynomial time.

Divide every endpoint and point coordinate by $L$. All intervals now have length one, while each interval contains exactly the same demand points as before. This proves Theorem~\ref{thm:main}.
\end{proof}

\begin{corollary}
\label{cor:optimization}
The problem of selecting the fewest intervals to fully cover at least $p$ points, and the problem of fully covering as many points as possible with at most $B$ intervals, are NP-hard even for unit intervals.
\end{corollary}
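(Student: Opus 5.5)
Both claims follow from Theorem~\ref{thm:main} by standard arguments that relate the optimization versions to the decision problem~\eqref{eq:decision}. The plan is to show that a polynomial-time algorithm for either optimization problem would decide~\eqref{eq:decision} in polynomial time. Theorem~\ref{thm:main} establishes that~\eqref{eq:decision} is NP-complete for unit intervals, so both optimization problems are NP-hard on the same class of instances.

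For the minimization problem, take an instance $(\calI,X,d,p,B)$ of the decision problem with unit intervals. Run the hypothetical algorithm on $(\calI,X,d,p)$.
\begin{itemize}
\item If it reports that no subfamily fully covers $p$ points, answer \emph{no}.
\item Otherwise it returns the minimum size $k$, and we answer \emph{yes} exactly when $k\le B$.
\end{itemize}
This is correct because some $S$ with $|S|\le B$ fully covers at least $p$ points if and only if the minimum such size is at most $B$. The instance is unchanged, so unit lengths are preserved.

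For the maximization problem, run the algorithm on $(\calI,X,d,B)$ to obtain the maximum number $P^*$ of points fully covered by at most $B$ intervals. Answer \emph{yes} exactly when $P^*\ge p$. If the maximizer is only required to return a subfamily $S$, first check $|S|\le B$ and then count $|\{x:\dep_S(x)\ge d(x)\}|$ in polynomial time, exactly as in the membership part of the proof of Theorem~\ref{thm:main}.

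There is no real obstacle. The only point needing care is the form of the optimization output: both the optimal value and an optimal solution yield the decision answer in polynomial time. In addition, the hardness instances from Theorem~\ref{thm:main} themselves consist of unit intervals, so the optimization problems inherit hardness on unit intervals.
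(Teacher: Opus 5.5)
Your proposal is correct and matches the paper, which states the corollary without a separate proof as an immediate consequence of Theorem~\ref{thm:main}. Your observation that either the optimal value or an optimal subfamily (whose fully covered points can be counted in polynomial time) settles the decision question~\eqref{eq:decision} on the same unit-interval instance is exactly that standard step.
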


\section{Conclusion}

We have proved that unweighted partial interval multicover is NP-complete even for unit intervals, answering the question in~\cite{Li2026}. 

\bibliographystyle{plain}
\bibliography{references}

@article{LI2026,
	title = {Partial interval multicover: Approximation and complexity},
	journal = {Theoretical Computer Science},
	volume = {1075},
	pages = {115958},
	year = {2026},
	issn = {0304-3975},
	author = {Peng Li and Xiangzhi Tu and Zhao Zhang and Yixin Cao},
}

@article{Mohar2001,
	title={Face covers and the genus problem for apex graphs},
	author={Mohar, Bojan},
	journal={Journal of Combinatorial Theory, Series B},
	volume={82},
	number={1},
	pages={102--117},
	year={2001},
	publisher={Elsevier}
}

@article{TamassiaTollis1989,
	title={Planar grid embedding in linear time},
	author={Tamassia, Roberto and Tollis, Ioannis G},
	journal={IEEE Transactions on circuits and systems},
	volume={36},
	number={9},
	pages={1230--1234},
	year={1989},
	publisher={IEEE}
}

@article{Wocjan2003,
	title={Treating the independent set problem by 2d Ising interactions with adiabatic quantum computing},
	author={Wocjan, Pawel and Janzing, Dominik and Beth, Thomas},
	journal={Quantum Information Processing},
	volume={2},
	number={4},
	pages={259--270},
	year={2003},
	publisher={Springer}
}

@article{BogartWest1999,
	title = {A short proof that ‘proper = unit’},
	journal = {Discrete Mathematics},
	volume = {201},
	number = {1},
	pages = {21-23},
	year = {1999},
	issn = {0012-365X},
	author = {Kenneth P. Bogart and Douglas B. West},
}

@article{Slavik1997,
	title={Improved performance of the greedy algorithm for partial cover},
	author={Slav{\'\i}k, Petr},
	journal={Information Processing Letters},
	volume={64},
	number={5},
	pages={251--254},
	year={1997},
	publisher={Elsevier}
}

@article{RajagopalanVazirani1998,
	title={Primal-dual RNC approximation algorithms for set cover and covering integer programs},
	author={Rajagopalan, Sridhar and Vazirani, Vijay V},
	journal={SIAM Journal on Computing},
	volume={28},
	number={2},
	pages={525--540},
	year={1998},
	publisher={SIAM}
}

@article{Chekuri2012,
	title={On the set multicover problem in geometric settings},
	author={Chekuri, Chandra and Clarkson, Kenneth L and Har-Peled, Sariel},
	journal={ACM Transactions on Algorithms (TALG)},
	volume={9},
	number={1},
	pages={1--17},
	year={2012},
	publisher={ACM, New York, NY, USA}
}

@article{Shi2019,
	title={Approximation algorithm for the partial set multi-cover problem},
	author={Shi, Yishuo and Ran, Yingli and Zhang, Zhao and Willson, James and Tong, Guangmo and Du, Ding-Zhu},
	journal={Journal of Global Optimization},
	volume={75},
	number={4},
	pages={1133--1146},
	year={2019},
	publisher={Springer}
}

@article{Ran2021,
	title={Breaking the r max barrier: Enhanced approximation algorithms for partial set multicover problem},
	author={Ran, Yingli and Zhang, Zhao and Tang, Shaojie and Du, Ding-Zhu},
	journal={INFORMS Journal on Computing},
	volume={33},
	number={2},
	pages={774--784},
	year={2021},
	publisher={INFORMS}
}

@article{Golab2009,
	title={Sequential dependencies},
	author={Golab, Lukasz and Karloff, Howard and Korn, Flip and Saha, Avishek and Srivastava, Divesh},
	journal={Proceedings of the VLDB Endowment},
	volume={2},
	number={1},
	pages={574--585},
	year={2009},
	publisher={VLDB Endowment}
}

@article{RanJinZhang2025,
	title={Approximation Algorithm for the Minimum Interval Partial Multi-Cover Problem},
	author={Ran, Yingli and Jin, Jianhong and Zhang, Zhao},
	journal={Networks},
	volume={85},
	number={3},
	pages={288--293},
	year={2025},
	publisher={Wiley Online Library}
}

@article{XuRanZhang2025,
	title={A 1/2-approximation algorithm for maximum interval multi-cover},
	author={Xu, Yuxuan and Ran, Yingli and Zhang, Zhao},
	journal={Theoretical Computer Science},
	volume={1049},
	pages={115369},
	year={2025},
	publisher={Elsevier}
}
\end{document}